\documentclass[a4paper,USenglish]{lipics}

\usepackage{microtype}

\usepackage[mathletters]{ucs}
\usepackage[utf8]{inputenc}

\usepackage{amsmath}
\usepackage{bussproofs}
\usepackage{stmaryrd}
\usepackage{comment}
\usepackage{color}
\usepackage{cite}
\usepackage{slashbox}
\usepackage{enumerate}
\usepackage{url}

\usepackage{xspace}
\usepackage{xstring}
\usepackage{ifthen}

\newtheorem{thm}{Theorem}
\newtheorem{exm}{Example}
\newtheorem{dfn}{Definition}
\newtheorem{lem}{Lemma}

\bibliographystyle{plain}

\title{Parametricity in an Impredicative Sort}

\author[1]{Chantal Keller}
\author[2]{Marc Lasson}
\affil[1]{INRIA Saclay--Île-de-France at École Polytechnique\\
  \texttt{Chantal.Keller@inria.fr}}
\affil[2]{École Normale Supérieure de Lyon, Université de Lyon, LIP
\footnote{UMR 5668 CNRS ENS Lyon UCBL INRIA}
\\\texttt{marc.lasson@ens-lyon.org}}
\authorrunning{C. Keller and M. Lasson}


\subjclass{F.4.1 Mathematical Logic}
\keywords{Calculus of Inductive Constructions; parametricity; impredicativity;
  \coq; universes.}

\volumeinfo
  {Patrick C\'{e}gielski, Arnaud Durand}
  {2}
  {Computer Science Logic 2012 (CSL'12)}
  {16}
  {1}
  {399}
\EventShortName{CSL'12}
\DOI{10.4230/LIPIcs.CSL.2012.399}

\lstdefinelanguage{Coq}%
  {morekeywords={Variable,Inductive,CoInductive,Fixpoint,CoFixpoint,%
      Definition,Program, Lemma,Theorem,Axiom,Local,Save,Grammar,Syntax,Intro,%
      Trivial,Qed,Intros,Symmetry,Simpl,Rewrite,Apply,Elim,Assumption,%
      Left,Cut,Case,Auto,Unfold,Exact,Right,Hypothesis,Pattern,Destruct,%
      Constructor,Defined,Fix,Record,Proof,Induction,Hints,Exists,let,in,%
      Parameter,Split,Red,Reflexivity,Transitivity,if,then,else,Opaque,%
      Transparent,Inversion,Absurd,Generalize,Mutual,Cases,of,end,Analyze,%
      AutoRewrite,Functional,Scheme,params,Refine,using,Discriminate,Try,%
      Require,Load,Import,Scope,Set,Open,Section,End,match,with,Ltac,fun,forall,exists %
	},%
   sensitive, %
   morecomment=[n]{(*}{*)},%
   morestring=[d]",%
   literate={=>}{{$\Rightarrow$}}1 {>->}{{$\rightarrowtail$}}2{->}{{$\to\,\,\,\,\,$}}1
   {\/\\}{{$\wedge$}}1
   {|-}{{$\vdash$}}1
   {\\\/}{{$\vee$}}1
   {~}{{$\sim$}}1
   {'}{'}1
   {⟦}{{$\llbracket$}}1
   {⟧}{{$\rrbracket$}}1
  }[keywords,comments,strings]%

\DeclareUnicodeCharacter{10214}{\llbracket}
\DeclareUnicodeCharacter{10215}{\rrbracket}

\DeclareMathOperator{\Prop}{\mathtt{Prop}}

\DeclareMathOperator{\Type}{\mathtt{Type}}
\DeclareMathOperator{\Set}{\mathtt{Set}}

\DeclareMathOperator{\Ind}{\mathtt{Ind}}

\DeclareMathOperator{\Boxy}{\mathtt{box}}
\DeclareMathOperator{\NNPP}{\mathtt{Peirce}}
\DeclareMathOperator{\PI}{\mathtt{PI}}

\DeclareMathOperator{\Close}{\mathtt{close}}
\DeclareMathOperator{\Nat}{\mathtt{nat}}
\DeclareMathOperator{\Map}{\mathtt{map}}
\DeclareMathOperator{\Tree}{\mathtt{tree}}
\DeclareMathOperator{\Node}{\mathtt{node}}

\DeclareMathOperator{\Leaf}{\mathtt{leaf}}
\DeclareMathOperator{\Bool}{\mathtt{bool}}
\DeclareMathOperator{\Succ}{\mathtt{S}}
\DeclareMathOperator{\List}{\mathtt{list}}
\DeclareMathOperator{\Church}{\mathtt{church}}
\DeclareMathOperator{\Iter}{\mathtt{iter}}
\DeclareMathOperator{\Vector}{\mathtt{vector}}
\DeclareMathOperator{\nil}{\mathtt{nil}}
\DeclareMathOperator{\cons}{\mathtt{cons}}
\DeclareMathOperator{\Vnil}{\mathtt{Vnil}}
\DeclareMathOperator{\Vcons}{\mathtt{Vcons}}
\DeclareMathOperator{\Land}{\mathtt{and}}
\DeclareMathOperator{\LI}{\mathtt{I}}
\DeclareMathOperator{\False}{\mathtt{False}}
\DeclareMathOperator{\True}{\mathtt{True}}
\DeclareMathOperator{\conj}{\mathtt{conj}}
\DeclareMathOperator{\Or}{\mathtt{or}}
\DeclareMathOperator{\Left}{\mathtt{left}}
\DeclareMathOperator{\Right}{\mathtt{right}}
\DeclareMathOperator{\Even}{\mathtt{even}}
\DeclareMathOperator{\Eq}{\mathtt{eq}}

\DeclareMathOperator{\EqP}{\mathtt{eqP}}
\DeclareMathOperator{\ReflP}{\mathtt{reflP}}
\DeclareMathOperator{\EqT}{\mathtt{eqT}}
\DeclareMathOperator{\ReflT}{\mathtt{reflT}}
\DeclareMathOperator{\Refl}{\mathtt{refl}}

\DeclareMathOperator{\case}{\mathtt{case}}
\DeclareMathOperator{\fix}{\mathtt{fix}}

\DeclareMathOperator{\inv}{\mathtt{inv}}
\DeclareMathOperator{\e}{\mathtt{e}}
\DeclareMathOperator{\fingrp}{\mathtt{fingrp}}
\DeclareMathOperator{\Fingrp}{\mathtt{Fingrp}}
\DeclareMathOperator{\elements}{\mathtt{elements}}

\def\coq{\textsf{Coq}\xspace}
\def\ecc{\textsf{ECC}\xspace}
\def\ml{\textsf{ML}\xspace}
\def\cic{\textsf{CIC}\xspace}
\def\cicr{$\text{\textsf{CIC}}_{\text{\textsf{r}}}$\xspace}
\def\agda{\textsf{Agda}\xspace}
\def\ocaml{\textsf{OCaml}\xspace}
\def\ssreflect{\textsf{Ssreflect}\xspace}
\def\coqparam{\textsf{CoqParam}\xspace}

\newcommand{\ECC}{{\ecc}}
\newcommand{\ML}{{\ml}}
\newcommand{\Fw}{\mathcal{F}_ω}

\newcommand{\rel}[3]{{⟦#1⟧}\,{#2}\,{#3}}

\newcommand{\arrlong}[1]{\overrightarrow{#1}}
\newcommand{\arrvar}[1]{\vec{#1}}

\newcommand{\we}{⊢_{\mathop{\text{SE}}}}

\newcommand\arr[1]{
  \StrLen{$#1$}[\MyStrLen]
  \ifthenelse{\equal{\MyStrLen}{1}}
       {\arrvar{#1}\,\,\!}{\arrlong{#1}}}
\newcommand\arn[2]{
{\arrlong{#1}}^{#2}
}

\begin{document}

\lstset{breaklines=true, xleftmargin=0.3cm, xrightmargin=0.3cm,
  breakatwhitespace=true, mathescape=true, basicstyle=\ttfamily,
  numbers=none, frame=none, language = Coq}

\maketitle

\begin{abstract}
  Reynold's abstraction theorem is now a well-established result for a
  large class of type systems. We propose here a definition of
  relational parametricity and a proof of the abstraction theorem in the
  Calculus of Inductive Constructions (\cic), the underlying formal language of
  \coq, in which parametricity relations' codomain is the impredicative
  sort of propositions. To proceed, we need to refine this calculus by
  splitting the sort hierarchy to separate informative terms from
  non-informative terms. This refinement is very close to \cic, but with
  the property that typing judgments can distinguish informative terms.
  Among many applications, this natural encoding of parametricity inside
  \cic serves both theoretical purposes (proving the independence of
  propositions with respect to the logical system) as well as practical
  aspirations (proving properties of finite algebraic structures). We
  finally discuss how we can simply build, on top of our calculus, a new
  reflexive \coq tactic that constructs proof terms by parametricity.
\end{abstract}

\EnableBpAbbreviations

\section{Introduction}

The \coq system~\cite{Coqdev11} is a proof assistant based on the
Curry-Howard correspondence: propositions are represented as types and
their proofs are their inhabitants. The underlying type system is called
the Calculus of Inductive Constructions (\cic in short). In this type
system, types and their inhabitants are expressions built from the same
grammar and every well-formed expression has a type.

One specificity of \coq among other interactive theorem provers based on Type
Theory is the presence of an impredicative sort to represent the set of
propositions: $\Prop$. Impredicativity means that propositions may be built
by quantification over objects which types inhabit any sort, including the
sort of propositions (for instance the \agda language has a similar type
system except that propositions live in predicative universes
\cite{norell07}). This sort plays a decisive role in the \coq system: in
addition to guaranteeing the compositionality of the propositional world, it
contains the non-computational content, i.e., expressions meant to be erased
by the program extraction process. In particular, it allows the user to add
axioms (like the law of excluded middle, axiom of choice, proof irrelevance,
etc...) without jeopardizing program extraction.

The other sorts are a predicative hierarchy of universes called
$\Type_0, \Type_1, \dots$. Contrary to $\Prop$, it is stratified: one is
not allowed to form a type of a given universe by quantifying over
objects of types of higher universes (stratification has been
introduced in order to overcome Girard's paradox, see
\cite{DBLP:conf/lics/Coquand86} for details). The sort $\Type_0$
(also called $\Set$) contains data-types and basic informative
types. And $\Type_1$ contains types that are quantified over elements of
$\Type_0$, and so on.

One major component of \coq is its extraction
mechanism~\cite{DBLP:conf/types/Letouzey02}, which produces an untyped
term of a \ML-like language from any well typed term of \coq. One
obvious interest is to obtain certified ML code. Roughly speaking, it
proceeds by replacing type annotations and propositional subterms by a
dummy constant. A
difficulty of program extraction is to decide
which terms are informative and which may be erased.
The presence of the sort $\Prop$
only partially solves this problem in \coq since the system has to
distinguish computations over data types from computations over types,
although they all live in $\Type$.

In this paper, we propose a new calculus  which refines the Calculus of
Inductive Constructions, called \cicr. By adding a new predicative hierarchy of sorts
$\Set_0$, $\Set_1$, ..., it confines the types of all informative expressions
and purges the hierarchy $\Type_0$, $\Type_1$, ... of all computational
content. In other words, it guarantees that inhabitants of types in $\Set$
are the only expressions which do not disappear during the extraction
process.

In spite of that, this new calculus may be naturally embedded into
\cic by a very simple forgetful operation. Moreover it remains very
close to \cic and in practice only few terms are not representable
in \cicr. That is why it represents a big step towards an
implementation in the \coq system.

Being able to identify expressions with computational content
-- or in other words \emph{programs} -- was essential to achieve our
initial goal: formalizing the parametricity theory for the Calculus of
Inductive Constructions.

Parametricity is a concept introduced by
Reynolds~\cite{DBLP:conf/ifip/Reynolds83} to study the type abstraction of
System F. It expresses the fact that well-typed programs must behave
uniformly over their arguments with an abstract type: if the type is
abstract, then the functions do not have access to its implementation.
Wadler~\cite{Wadler89} explained how this could be used to
deduce interesting properties shared by all programs of the same type.
Later, Plotkin and Abadi~\cite{DBLP:conf/tlca/PlotkinA93} introduced a logic in which these uniformity
properties can be expressed and proved. This logic may be generalized into
a second-order logic with higher-order individuals
\cite{DBLP:journals/fuin/Takeuti98, DBLP:journals/tcs/Wadler07}.

The main tools of parametricity theory are logical relations defined
inductively over the structure of types together with the so-called
\emph{abstraction theorem}, which builds a proof that any closed program is related to itself for the
relation induced by its type. For instance the relation induced by the type
$∀α,(α → α) → α → α$ of Church numerals is given by the following definition
(represented here in \coq using the standard encoding of relations):
\begin{align*}
 λ(f\,f': ∀α.(α → α) → α → α). ∀ α\,α'\,(R : α → α'→ \Prop)\,(g : α → α) (g' : α' → α').& \\
(∀ x\,x'. R\,x\,x' → R\,(g\,x)\,(g'\,x')) → ∀ z\,z'. R\,z\,z' → R\,(f\,α\,g\,z)\,(f'\,α'\,g'\,z')&
\end{align*}
The abstraction theorem tells that any closed term $F$ of
type $∀α,(α → α) → α → α$ is related to itself according to this relation.

Recently, the work from Bernardy \emph{et
al.}~\cite{DBLP:conf/icfp/BernardyJP10} generalized these constructions up
to a large class of Pure Type Systems and showed that parametricity theory
accommodates well with dependent types. But this cannot be straightforwardly
adapted to \cic, because parametricity relations live in higher universes
instead of using the standard encoding of relations in $\Prop$. Besides,
it is difficult to make parametricity relations live inside $\Prop$ while
conserving abstraction.

But parametricity in a system like \coq would be profitable: it could lead to
more automation, for instance for developing mathematical theories: we
give here an example in finite group theory (Section ~\ref{sec:examples:algebra}). Basing on our refined calculus,
we started an implementation of a \coq tactic that can build closed instances
of the abstraction theorem~\cite{implem12}.

The paper is organized as follows. After explaining in details why we
need to refine the Calculus of Inductive Constructions, we present
\cicr in Sections~\ref{Calculus} and \ref{sec:inductive}.
Section~\ref{sec:parametricity} is devoted to the definition of
relational parametricity, and the proof of the abstraction
theorem, without and with inductive types. In
Section~\ref{sec:examples}, we present different kinds of ``theorems for
free'' that are derived from the general abstraction theorem, like
independence of the law of excluded middle with respect to \cicr or standard
properties of finite groups. We finally
explain the algorithm behind the implementation of the \coq tactic
(Section~\ref{sec:tactic}) before discussing related works and
concluding.

\section{\label{Calculus}\cicr: a refined calculus of constructions with universes}

\subsection{The need for a refinement}

In older versions of \cic, $\Set$ was not a synonym for $\Type_0$ but a
special impredicative sort containing data-types and basic informative
types. However, there is a smaller demand from the users for the
impredicativity of $\Set$ rather than the possibility to add classical
axioms to \cic, and having both may lead to the inconsistency of the
system (the conjunction of excluded middle and description conflicts with
the impredicativity of $\Set$~\cite{Geuvers01}). As a result, nowadays $\Set$
is predicative and behaves in \cic as the first level of the hierarchy of
universes.

In \cicr, we want to reintroduce the sort $\Set$ of
informative types in order to mark the distinction between
expressions with computational content and expressions
which are erased during the extraction process. To stay close to \cic, we
want $\Set$ to be predicative, so we introduce a hierarchy of
sorts $\Set_0$, $\Set_1$, \dots

In the refinement, the \cic hierarchy of sorts $\Type$ is thus divided
into two classes : a hierarchy of sorts $\Set$, whose inhabitants have a
  computational content, and a hierarchy of sorts $\Type$, whose inhabitants are uninformative.
There is a difference of level between inhabitants
of $\Set$ and inhabitants of $\Type$:
the inhabitants of $\Set$ are inhabited only by non-habitable expressions
whereas $\Type$ contains the
signatures of predicates and type constructors which are
themselves, when fully applied, inhabited respectively by proofs and
programs.

In \coq, deciding to which of this two classes an expression of type
$\Type$ belongs is essential in the extraction mechanism.  In the original
two-sorted calculus of constructions (i.e. without universes), the
top-sort contains only arities and therefore the level of terms can be easily
obtained by looking at the type derivation and the extraction procedure is
simple \cite{DBLP:conf/popl/Paulin-Mohring89}. However, in \coq, to extract the
computational content of an inhabitant of sort $\Type$, the extraction
algorithm decides if a type is informative by inspecting the shape of its
normal form \cite{letouzey04, DBLP:conf/cie/Letouzey08}.
Therefore termination of extraction relies on the normalization of
\cic. It makes the correction of the extraction difficult to formally certify.

\subsection{Presentation of the calculus}

The syntax of \cicr is the same as the standard calculus of
constructions except that we extend the set of sorts. Terms are generated by the following grammar:
\newcommand{\pouf}{\hspace{0.8em}}
 $$ A, B \pouf := \pouf  x  \pouf|\pouf  s \pouf|\pouf ∀x:A.B \pouf|\pouf λx:A.B  \pouf|\pouf  (A\,B) $$
where $s$ ranges over the set
 $\left\{\Prop\}∪\{\Set_i, \Type_{i+1} | i ∈ \mathbb{N} \right\}$ of \emph{sorts}
and $x$ ranges over the set of \emph{variables}.
In the remaining of the paper, when no confusion is possible, $\Set$
stands for ``$\Set_i$ for some $i$'', and $\Type$ stands for ``$\Type_i$
for some $i$''.
The notation $i∨j$ represents the maximum of $i$ and $j$.

As usual, we will consider terms up to α-conversion and we denote by
$A[B/x]$ the term built by substituting the term $B$ to each free occurrence of
$x$ in $A$. The $β$-reduction $\rhd$ is defined as in \cic, and we write $A≡B$ to denote the $β$-conversion.

A context $\Gamma$ is a list of couples $x:A$ where $x$ is a variable
and $A$ is term. The empty context is written
$\langle\rangle$. The system has subtyping,
given by the rules in \textsf{\textbf{\small Figure~\ref{fig:subtyping_rules}}}. The typing rules
of \cicr are given in \textsf{\textbf{\small Figure~\ref{fig:refined_calculus}}}.

The word \emph{type} is a synonymous for a term that can be typed by a sort
following those rules. We call \emph{informative types} inhabitants of $\Set$,
\emph{programs} inhabitants of informative types, \emph{propositions} inhabitants
of $\Prop$ and \emph{proofs} inhabitants of propositions.
The sort $\Type_i$ adds a shallow level to the system;
it is populated with two kinds of terms: \emph{arities}, which are terms
 whose head normal forms have the form
  $∀(x₁:A₁)\dots(x_n:A_n).s$ where $s$ is either $\Prop$, $\Set_j$ or
  $\Type_j$ with $j < i$; and
higher-order functions that manipulate arities, and whose types are arities
with $\Type_{i+1}$ as a conclusion. We say that a term has some sort $s$
if $s$ is the type of its type.

\begin{figure*}
\begin{prooftree}
 \AXC{}
 \LeftLabel{$i < j$}
 \RightLabel{\textsc{(Sub$_{1-1}$)}}
 \UIC{$\Set_i <: \Set_j$}

 \AXC{}
 \LeftLabel{$i < j$}
 \RightLabel{\textsc{(Sub$_{1-2}$)}}
 \UIC{$\Type_i <: \Type_j$}

 \noLine
 \BIC{}
\end{prooftree}

\begin{prooftree}
 \AXC{$A <: B$}
 \RightLabel{\textsc{(Sub$_2$)}}
 \UIC{$∀x:C.A <: ∀x:C.B$}
\end{prooftree}
\caption{Subtyping rules}
\label{fig:subtyping_rules}
\end{figure*}

\begin{figure*}
\begin{prooftree}
 \AXC{}
 \RightLabel{\textsc{(Ax$_1$)}}
 \UIC{$⊢ \Prop : \Type_1$}

 \AXC{}
 \RightLabel{\textsc{(Ax$_2$)}}
 \UIC{$⊢ \Set_i : \Type_{i+1}$}

 \AXC{}
 \RightLabel{\textsc{(Ax$_3$)}}
 \UIC{$⊢ \Type_i : \Type_{i+1}$}

 \noLine
 \TIC{}
\end{prooftree}

\begin{prooftree}
 \AXC{$Γ ⊢ A:s$}
 \LeftLabel{$x \not\in Γ, s \in \mathcal{S}$}
 \RightLabel{\textsc{(Var)}}
 \UIC{$Γ,x:A ⊢ x:A$}

 \AXC{$Γ ⊢ B : C$}
 \AXC{$Γ ⊢ A:s$}
 \LeftLabel{$x \not\in Γ, s \in \mathcal{S}$}
 \RightLabel{\textsc{(Weak)}}
 \BIC{$Γ,x:A ⊢ B : C$}

 \noLine
 \BIC{}
\end{prooftree}

\begin{prooftree}
 \AXC{$Γ ⊢  A : C$}
 \AXC{$Γ ⊢  B : s$}
 \LeftLabel{$B ≡ C, s \in \mathcal{S}$}
 \RightLabel{\textsc{(Conv)}}
 \BIC{$Γ ⊢  A : B$}

 \AXC{$Γ ⊢ A : B$}
 \LeftLabel{$B <: C$}
 \RightLabel{\textsc{(Cum)}}
 \UIC{$Γ ⊢ A : C$}

 \noLine
 \BIC{}
\end{prooftree}

\begin{prooftree}
  \AXC{$Γ ⊢  A : r$}
  \AXC{$Γ, x : A ⊢  B : \Set_i$}
  \LeftLabel{$r \in \{\Prop,\Set_i,\Type_i\}$}
  \RightLabel{\textsc{($∀_1$)}}
  \BIC{$Γ ⊢ ∀x:A.B : \Set_i$}
\end{prooftree}

\begin{prooftree}
  \AXC{$Γ ⊢  A : r$}
  \AXC{$Γ, x : A ⊢  B : \Type_i$}
  \LeftLabel{$r \in \{\Prop,\Set_i,\Type_i\}$}
  \RightLabel{\textsc{($∀_2$)}}
  \BIC{$Γ ⊢ ∀x:A.B : \Type_i$}
\end{prooftree}

\begin{prooftree}
  \AXC{$Γ ⊢  A : s$}
  \AXC{$Γ, x : A ⊢  B : \Prop$}
  \LeftLabel{$s \in \mathcal{S}$}
  \RightLabel{\textsc{($∀_3$)}}
  \BIC{$Γ ⊢ ∀x:A.B : \Prop$}
\end{prooftree}

\begin{prooftree}
  \AXC{$Γ ⊢ M : ∀x:A.B$}
  \AXC{$Γ ⊢ N : A$}
  \RightLabel{\textsc{(App)}}
  \BIC{$Γ ⊢ M\,N : B[N/x]$}

  \AXC{$Γ, x : A ⊢  B : C$}
  \RightLabel{\textsc{(Abs)}}
  \UIC{$Γ ⊢ λx :A.B : ∀x:A.C$}
 \noLine
 \BIC{}
\end{prooftree}
\caption{The refined calculus of construction with universes : \cicr}
\label{fig:refined_calculus}
\end{figure*}

\section{Inductive types}\label{sec:inductive}

The calculus is extended with inductive definitions and fixpoints. The presentation is
very similar to Chapter 4.5 of the Reference Manual of
\coq~\cite{Coqdev11}, and one can report to it to have further details.

\subsection{Inductive types and fixpoints}

The grammar of \cicr terms is extended with:
\[
  A, B, P, Q, F  \dots := \cdots \quad  | \quad  I \quad | \quad c \quad
  | \quad \case_I(A,\arr{Q}, P, \arr{F}) \quad | \quad \fix\,(x : A).B
\]

We write $\Ind^p(I:A, c_1:C_1,\dots,c_k:C_k)$ to state that $I$ is a
well-formed inductive definition typed with $p$ parameters,
of arity $A$, with $k$ constructors $c_1,\dots, c_k$ of respective types $C_1,\dots,C_k$.
It requires that:
\begin{enumerate}
\item the names $I$ and $c_j$ are fresh;
\item $A$ is a well-typed arity of conclusion $\Prop$ of $\Set$: it is
  convertible to $∀\arn{(x:P)}{p}\arn{(y:B)}{n}.r$ where $r \in \{\Prop,\Set\}$;
 \item for any $j$, $C_j$ has the form
   $∀\arn{(x:P)}{p}\arn{(z:E_j)}{n_j}.I\,\arn{x}{p}\,\arn{D_j}{n}$ where
   $I$ may appear inside the $E_j$s only as a conclusion. This is called
   the \emph{strict positivity} condition, and is mandatory for the
   system to be coherent~\cite{DBLP:conf/colog/CoquandP88};
 \item for any $j$, $∀\arn{(z:E_j)}{n_j}.I\,\arn{x}{p}\,\arn{D_j}{n}$ is a well-typed expression of sort $r$ under
   the context $(\arn{(x:P)}{p}, I : A)$.
\end{enumerate}
Notice that we do not allow inductive definitions in a nonempty context,
but this is only for a matter of clarity.

Declaring a new inductive definition adds new constants $I$ and $c_j$ to
the system, together with the top left two typing rules presented in
\textsf{\textbf{\small Figure~\ref{fig:inductive_typing_rules}}}.

\begin{figure*}
\begin{prooftree}
  \AXC{}
  \RightLabel{\textsc{(Ind)}}
  \UIC{$⊢ I : A$}

  \AXC{}
  \RightLabel{\textsc{(Constr)}}
  \UIC{$⊢ c_j : C_j$}

  \AXC{$Γ, f : A ⊢ M : A$}
  \LeftLabel{$f$ is guarded}
  \RightLabel{\textsc{(Fix)}}
  \UIC{$Γ ⊢ \fix (f : A).M : A$}

  \noLine
  \TIC{}
\end{prooftree}
\begin{prooftree}
  \AXC{$Γ ⊢ M : I\,\arn{Q}{p}\,\arn{G}{n}$}
  \AXC{$Γ ⊢ T : ∀\arn{y:B[\arn{Q}{p}/\arn{x}{p}]}{n}.I\,\arn{Q}{p}\,\arn{y}{n} → r$}
  \noLine
  \BIC{$\left(Γ ⊢ F_j :
  ∀\arn{(z:E_j[\arn{Q}{p}/\arn{x}{p}])}{n_j}.
                     T\,\arn{D_j[\arn{Q}{p}/\arn{x}{p}]}{n}\,(c_j\,\arn{Q}{p}\arn{z}{n_j})\right)_{j=1\dots k}$}
 \LeftLabel{(under restr.)
 }
  \RightLabel{\textsc{(Case)}}
  \UIC{$Γ ⊢ \case_I(M,\arn{Q}{p},T, \arn{F}{k}) : T\,\arn{G}{n}\,M $}
\end{prooftree}
  \caption{The rules for an inductive type $\Ind^p(I:A, c_1:C_1,\dots,c_k:C_k)$}
  \label{fig:inductive_typing_rules}
\end{figure*}

The bottom rule of \textsf{\textbf{\small Figure~\ref{fig:inductive_typing_rules}}} is the typing
rule for the $\case$ construction which is used to implement elimination
schemes. Two sorts are involved in eliminations: $s$, the sort of the
inductive type we eliminate, which may be $\Prop$ or $\Set$; and $r$,
the type of the type of the term we construct, which may be $\Prop$, $\Set$ or
$\Type$. The four cases of elimination that do not involve $\Type$
are called \emph{small eliminations}.
They are used to build:
\begin{itemize}
\item proofs and programs by inspecting programs;
\item proofs by inspecting proofs;
\item programs by inspecting proofs under restriction~(\ref{eq:restr1})
  (see below).
\end{itemize}
The other two cases are called \emph{large eliminations}. Strong elimination is
mainly used to build propositions by case analysis and to internally
prove the minimality of informative
inductive definitions. For instance, using large elimination over $\Nat$, one may build a
predicate $P$ of type $\Nat → \Prop$ such that $P\,0 ≡ ⊤$ and
$P\,(\Succ\,0) ≡ ⊥$ and thus proves that $0\,≠ (\Succ\,0)$. Similarly,
large elimination may be used to build informative types (for instance, to
build a ``type constructor'' $T_{α,β} :\Nat → \Set$ parametrized by a type
$α$ and an informative type $β$ such that
$T_{α,β}\,n ≡ α → \cdots → α → β$ with $n$ occurrences of $α$)
or to build arities (for instance, to build an ``arity constructor''
$A_α :\Nat → \Type$ parametrized by a type $α$ such that
$A_α\,n ≡α → \cdots → α → \Prop$ with $n$ occurrences of $α$).

Eliminations from $\Prop$ to other sorts are restricted to inductive
definitions that have at most one constructor, and such that all the
arguments (which are not parameters) of this constructor are of
sort $\Prop$:
\begin{equation}\label{eq:restr1}
  k = 0 \text{ or } \left(k = 1 \text{ and } \vdash E : \Prop \text{ for
    any } E ∈ \arn{E_1}{n_1}\right)
\end{equation}
This is essential for coherence~\cite{DBLP:conf/lics/Coquand86}
and has a computational interpretation: it is natural that computing an
informative type should not rely on any proof structure, that would
disappear during program
extraction~\cite{DBLP:conf/types/Letouzey02,letouzey04}.

\begin{exm} Here are a few examples of inductive definitions:
 \begin{itemize}
  \item $\Ind^0 (\Nat : \Set_0, \quad 0 : \Nat, \quad \Succ : \Nat → \Nat)$
  \item $\Ind^1 (\List_i : \Set_i → \Set_i, \quad \nil_i : ∀A:\Set_i.\List_i\,A, \quad \cons_i :∀A:\Set_i. A →  \List_i\,A → \List_i\,A)$
  \item $\Ind^0(\True : \Prop, \quad \LI : \True) $
  \item $\Ind^0(\False : \Prop)$
 \item $\Ind^2(\Eq_i : ∀(A:\Set_i).A → A → \Prop, \quad \Refl_i:  ∀(A:\Set_i)(x:A).\Eq_i\,x\,x)$
 \item $\Ind^2(\EqP : ∀(A:\Prop).A → A → \Prop, \quad \ReflP:  ∀(A:\Prop)(x:A).\EqP \,x\,x)$
 \item $\Ind^2(\EqT_i : ∀(A:\Type_i).A → A → \Prop, \quad \ReflT_i:  ∀(A:\Type_i)(x:A).\EqT_i \,x\,x)$
 \end{itemize}

Note that we have three levels of Leibniz equality: $\Eq_i$ for comparing
programs, $\EqP$ for comparing proofs and $\EqT_i$ for comparing everything
else (we find the same kind of triplication for other standard encodings like
cartesian product, disjoint sum and the existential quantifier).
\end{exm}

The second operator to deal with inductive definitions is fixpoint definition. The typing
rule for the fixpoint is defined on the top right of
\textsf{\textbf{\small Figure~\ref{fig:inductive_typing_rules}}}. It is also restricted to avoid
non-terminating terms, which would lead to absurdity. The restriction is
called the \emph{guard condition}: one argument should have an inductive
type, and must structurally decrease on each recursive call. One may
refer to~\cite{Gimenez95} for further details.

We extend the reduction with the $\iota$-reduction rules:
\begin{equation*}
\begin{array}{rcl}
\case_I(c_j\,\arn{Q}{p}\,\arn{M}{n_j}, \arn{Q}{p}, T, \arn{F}{k})
& \rhd & F_j\,\arn{M}{n_j}\\
(\fix (f:A).M)\,(c_j\,\arn{Q}{p}\,\arn{M}{n_j}) & \rhd & M[\fix (f:A).M/f]\,(c_j\,\arn{Q}{p}\,\arn{M}{n_j})
\end{array}
\end{equation*}
and $\equiv$ denotes the $\beta\iota$-equivalence.

\subsection{Embedding \cicr into \cic and coherence}

This calculus embeds easily into \cic by mapping $\Set_i$ and $\Type_i$
onto the sort $\Type_i$ of \cic:
\begin{lem}
Let $|\bullet|$ be the context-closed function from terms of \cicr to terms
of \cic such that $|\Prop| = \Prop$ and $|\Type_i| = |\Set_i| = \Type_i$, then
we have :
$$Γ ⊢ A : B ⇒  |Γ| ⊢_{\cic} |A| : |B|$$
\end{lem}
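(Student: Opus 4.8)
The plan is to proceed by induction on the derivation of $\Gamma \vdash A : B$ in \cicr, showing that every rule of Figure~\ref{fig:refined_calculus} and Figure~\ref{fig:inductive_typing_rules} is matched, after applying $|\cdot|$, by a corresponding derivation of \cic. Before the induction I would record three routine auxiliary facts about $|\cdot|$, each proved by a straightforward structural induction on terms: (i) it commutes with substitution, $|A[B/x]| = |A|[|B|/x]$, since $|\cdot|$ sends variables to themselves and is a homomorphism for all other constructors; (ii) it preserves one‑step $\beta\iota$‑reduction, $A \rhd B \Rightarrow |A| \rhd |B|$, and hence $A \equiv B \Rightarrow |A| \equiv |B|$ — the $\beta$ and $\iota$ redexes of \cicr are syntactically those of \cic and their contracta are computed by substitution, covered by (i); and (iii) it preserves subtyping, $A <: B \Rightarrow |A| <: |B|$ in \cic, which is immediate from Figure~\ref{fig:subtyping_rules}, as $\Set_i <: \Set_j$ and $\Type_i <: \Type_j$ both become $\Type_i <: \Type_j$ and the product congruence rule exists in \cic as well. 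I also note that $|\cdot|$ extends pointwise to contexts and that $|s|$ is a sort of \cic for every $s \in \mathcal{S}$.

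The inductive step is a case analysis on the last rule. The axioms (Ax$_1$)–(Ax$_3$) translate to $\vdash_{\cic} \Prop : \Type_1$ and $\vdash_{\cic} \Type_i : \Type_{i+1}$, which hold. The cases (Var), (Weak), (App), (Abs) follow from the induction hypothesis together with (i); (Conv) uses (ii) and (Cum) uses (iii). For the product rules the only point to verify is that \cic has a product formation rule for each combination produced by $|\cdot|$: $(\forall_1)$ and $(\forall_2)$ yield, after translation, products whose domain lies in $\Prop$ or $\Type_i$ and whose codomain lies in $\Type_i$, which are instances of the \cic rules $(\Prop,\Type_i,\Type_i)$ and $(\Type_i,\Type_i,\Type_i)$, while $(\forall_3)$ yields a product with codomain in $\Prop$, covered by the impredicative rules $(\Prop,\Prop,\Prop)$ and $(\Type_i,\Prop,\Prop)$; cumulativity of \cic absorbs the remaining level bookkeeping. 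Since collapsing $\Set_i$ and $\Type_i$ onto $\Type_i$ can only make more judgments derivable, the identification never destroys a needed distinction — the statement asserts only preservation of derivability.

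For the inductive constructions I would check that $\Ind^p(I{:}A,\dots)$ translates to a legal inductive declaration of \cic: an arity of conclusion $\Prop$ or $\Set$ becomes one of conclusion $\Prop$ or $\Type$ (both admissible); strict positivity is a syntactic condition preserved by the homomorphism; the well‑typedness side conditions transfer by the induction hypothesis. The rules (Ind), (Constr), (Fix) translate directly (the guard condition is syntactic and unchanged), and for (Case) one checks that each elimination permitted in \cicr — the small ones $\Prop\to\Prop$, $\Set\to\Prop$, $\Set\to\Set$, the restricted $\Prop\to\Set$, and the large ones $\Set\to\Type$ and the restricted $\Prop\to\Type$ — maps to an elimination allowed in \cic, the two restricted cases becoming exactly \cic's singleton‑elimination rule since the restriction~(\ref{eq:restr1}) is precisely the one used in \cic. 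I expect this last verification — that every elimination scheme and every inductive side condition of \cicr lands inside the corresponding \cic rule — to be the only place requiring genuine care; the rest is bookkeeping around substitution, reduction and universe levels.
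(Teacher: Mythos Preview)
Your argument is correct and is exactly the routine verification one would expect. The paper, however, does not give a proof of this lemma at all: it states it and immediately uses it to deduce coherence of \cicr from that of \cic. So there is nothing to compare against beyond observing that the authors regard the result as evident, and your induction on derivations, together with the three auxiliary facts about $|\cdot|$ (compatibility with substitution, reduction, and subtyping), is precisely the standard way to discharge such a syntactic translation lemma. Your remark that the only genuine checking occurs at the product rules and the elimination schemes is accurate; the rest is indeed bookkeeping.
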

Since $|∀X:\Prop.X|=∀X:\Prop.X$, the logical coherence (the existence of an
unprovable proposition) of \cic ensures the coherence of \cicr.

Conversely, some terms of \cic do not have a counterpart in the
refinement: we cannot mix informative and uninformative types. An
example is the following \coq definition:
\begin{lstlisting}
fun (b:bool) => if b then nat else Set
\end{lstlisting}

\section{Relational parametricity}\label{sec:parametricity}

In this setting, we have a natural notion of parametricity: we can
define a translation that maps types to relations and other terms to
proofs that they belong to those relations. What is new is that
relations over objects of type $\Prop$ or $\Set$ have $\Prop$ as a
codomain, which is more natural in a calculus with an impredicative sort
for propositions.

We go step by step. First, we define parametricity for the calculus
without inductive types, and show the abstraction theorem for this
restriction. Subsequently, we add inductive types with large
eliminations forbidden, and finally see how large eliminations behave
with parametricity.

\subsection{Parametricity for the calculus without inductive types}
\setcounter{equation}{0} 

\begin{dfn}[\label{Parametricity}Parametricity relation]
The parametricity translation $⟦\bullet⟧$ is defined by induction on the
structure of terms:
\begin{align}
  ⟦\langle\rangle⟧ = &\,\langle \rangle \\
  ⟦Γ, x:A⟧ = &\,⟦Γ⟧, x:A, x':A',x_R :⟦A⟧\,x\,x' \\
  \nonumber\\
  ⟦s⟧ = &\,λ(x:s)(x':s).x → x' → \hat{s} \\
  ⟦x⟧ = &\,x_R \\
  \nonumber
  ⟦∀x\!:\! A.B⟧ = &\,λ(f:∀x:A.B)(f':∀x':A'.B').
  \quad ∀(x:A)(x':A')(x_R:\rel{A}{x}{x'}).\\
  & \qquad \rel{B}{(f\,x)}{(f'\,x')} \label{DefProd} \\
  ⟦λx:A.B⟧ = &\,λ(x:A)(x':A')(x_R:\rel{A}{x}{x'}).⟦B⟧ \\
  ⟦(A\,B)⟧ = &\,(⟦A⟧\,B\,B'\,⟦B⟧)
\end{align}
with $\hat{\Prop} = \hat{\Set_i} = \Prop$ and $\hat{\Type_i} = \Type_i$
and where $A'$ denotes the term $A$ in which we have replaced each variable
$x$ by a fresh variable $x'$.
\end{dfn}

It is easy to prove by induction that the previous definition is
well-behaved with respect to substitution and conversion:

\begin{lem}[Substitution lemmas]\label{substLemma}
\begin{enumerate}
\item $\left(A[B/x]\right)' = \,A'[B'/x']$
\item $⟦A[B/x]⟧ = \,⟦A⟧[B/x][B'/x']\left[⟦B⟧/x_R\right]$
\item $A₁ ≡_{\beta} A₂ ⇒ \,⟦A₁⟧ \equiv_{\beta} ⟦A₂⟧$
\end{enumerate}
\end{lem}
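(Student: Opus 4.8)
The plan is to prove all three statements by structural induction on the term $A$, treating them essentially in parallel since they share the same inductive skeleton. Statement (1) is the warm-up: the priming operation $(\cdot)'$ is just a bijective renaming of variables $x \mapsto x'$, so it is a ring homomorphism with respect to substitution. The only subtlety is bookkeeping: one must check that the fresh-variable convention is consistent, i.e. that $(B/x)$ renaming commutes with $(\cdot)'$ because $x' $ is precisely the renamed image of $x$ and freshness is preserved. I would dispatch this first, by a quick induction on $A$ with cases for variables (where $x'[B'/x']=B'$ matches $(x[B/x])'=B'$, and $y'[B'/x']=y'=(y[B/x])'$ for $y\neq x$), sorts (constant, trivial), and the binders and application, where it follows immediately from the induction hypothesis.

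Next I would prove (2) by induction on $A$, using (1) freely. The interesting cases are the variable case and the binder cases. For $A = x$ we have $⟦x⟧[B/x][B'/x'][⟦B⟧/x_R] = x_R[⟦B⟧/x_R] = ⟦B⟧ = ⟦x[B/x]⟧$, and for $A=y$ with $y \neq x$ all three substitutions are vacuous. For $A = s$ a sort, $⟦s⟧$ mentions no variables at all (it is $λ(x\!:\!s)(x'\!:\!s).x\to x'\to\hat s$), so both sides equal $⟦s⟧$ — here one must be mildly careful that the bound $x,x'$ in $⟦s⟧$ are not captured, which the α-convention handles. The product, abstraction and application cases are then a matter of pushing the substitution inward through the defining equation of $⟦\cdot⟧$, applying the induction hypothesis to each immediate subterm, and re-assembling; for the product case one also invokes the ordinary substitution lemma for \cicr\ terms (the commented-out ``Substitution lemma'') to handle the non-translated occurrences $A,A',B,B'$ inside $\rel{A}{x}{x'}$ and $\rel{B}{(f\,x)}{(f'\,x')}$. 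I would also note that the auxiliary fact $⟦A'⟧ = ⟦A⟧'$-flavoured renamings behave correctly — more precisely, that substituting $B'$ for $x'$ inside $⟦A⟧$ only touches the primed layer — which is again an easy sub-induction or can be folded into the main one.

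Finally, (3): given $A_1 \rhd_\beta A_2$ I would show $⟦A_1⟧ \equiv_\beta ⟦A_2⟧$, which by the usual argument (congruence and confluence of $\beta$) reduces to the case of a single top-level $\beta$-redex, $A_1 = (λx\!:\!C.D)\,E$ and $A_2 = D[E/x]$, together with the closure of $⟦\cdot⟧$ under contexts. For the head redex, unfolding the definition gives $⟦(λx\!:\!C.D)\,E⟧ = (λ(x\!:\!C)(x'\!:\!C')(x_R\!:\!\rel{C}{x}{x'}).⟦D⟧)\;E\;E'\;⟦E⟧$, which $\beta$-reduces in three steps to $⟦D⟧[E/x][E'/x'][⟦E⟧/x_R]$, and this is exactly $⟦D[E/x]⟧ = ⟦A_2⟧$ by part (2). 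The congruence/contextual part is routine: $⟦\cdot⟧$ is defined compositionally, so a reduction inside a subterm translates to a reduction (possibly several, because a subterm may be duplicated as $E$ and $E'$ and $⟦E⟧$) inside the corresponding piece of the translation, hence to a $\beta$-conversion. The main obstacle, such as it is, is not any single case but the discipline of the fresh-variable/priming convention: one has to be scrupulous that $x'$ and $x_R$ are always chosen fresh and that priming and translation interact coherently with substitution, so that the equalities in (1) and (2) are genuine syntactic identities rather than merely α-equivalences — once that convention is pinned down, every case is a direct computation.
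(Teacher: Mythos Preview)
Your proposal is correct and follows exactly the approach the paper has in mind: the paper merely says ``It is easy to prove by induction that the previous definition is well-behaved with respect to substitution and conversion,'' and your structural induction on $A$ for (1) and (2), followed by the standard reduction of (3) to a single head $\beta$-redex via (2) plus compositionality of $⟦\cdot⟧$, is precisely that induction spelled out. There is nothing to add.
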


The abstraction theorem states that the parametricity transformation
preserves typing.

\begin{thm}[\label{Abstraction}Abstraction without inductive definitions]
If $Γ ⊢ A : B$, then $⟦Γ⟧ ⊢ A : B$, $⟦Γ⟧ ⊢ A' : B'$, and
    $⟦Γ⟧ ⊢ ⟦A⟧ : \rel{B}{A}{A'}$.
\end{thm}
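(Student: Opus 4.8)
The plan is to argue by induction on the derivation of $\Gamma \vdash A : B$, proving the three conclusions \emph{simultaneously}; they are genuinely interdependent, because the parametricity conclusion $\llbracket\Gamma\rrbracket \vdash \llbracket A\rrbracket : \rel{B}{A}{A'}$ repeatedly invokes \textsc{(Conv)} and \textsc{(Cum)}, whose side conditions require exhibiting a sort for the type in play, and that sort is furnished by the first two conclusions $\llbracket\Gamma\rrbracket \vdash A : B$ and $\llbracket\Gamma\rrbracket \vdash A' : B'$ applied to subterms. The first two conclusions are not really about parametricity. The former is \emph{thinning}: $\llbracket\Gamma\rrbracket$ contains all the declarations of $\Gamma$ (with the primed and $x_R$ declarations interleaved) and is well-formed, which is itself part of the induction, obtained by applying the statement to the prefixes of $\Gamma$. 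The latter is a \emph{renaming lemma} — priming is an injective variable renaming, so $\Gamma \vdash A : B$ gives $\Gamma' \vdash A' : B'$ — together with the remark that $\Gamma'$ is an order-preserving subcontext of $\llbracket\Gamma\rrbracket$, so thinning again applies. Hence the first thing I would do is set up the routine structural meta-theory of \cicr: weakening/thinning, the renaming lemma, and the fact that the type of a typable term is itself typable by a sort (used in the \textsc{(App)} case below; it is one of the ``miscellaneous properties'' of \cicr).

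The core is the third conclusion, and the workhorse is the definitional identity $\rel{s}{A}{A'} \equiv A \to A' \to \hat{s}$ (with $\hat{\Prop} = \hat{\Set_i} = \Prop$ and $\hat{\Type_i} = \Type_i$), which pins down the codomain of $\llbracket A\rrbracket$. I would then proceed by cases on the last rule. For the axioms, this reduces to typing $\llbracket s\rrbracket = \lambda(x\!:\!s)(x'\!:\!s).\,x \to x' \to \hat{s}$ at $s \to s \to \Type_j$ (with the appropriate index), i.e. to placing the product $x \to x' \to \hat{s}$ in a sort, which is immediate from the product rules. \textsc{(Var)} and \textsc{(Weak)} are direct: the induction hypothesis already says $\rel{A}{x}{x'}$ is a type, so $x_R$ may be declared. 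For \textsc{(Conv)}, Lemma~\ref{substLemma}(3) upgrades $B \equiv C$ to $\llbracket B\rrbracket \equiv \llbracket C\rrbracket$, hence $\rel{B}{A}{A'} \equiv \rel{C}{A}{A'}$, and \textsc{(Conv)} closes the case. For \textsc{(Cum)} I would run a short auxiliary induction on the subtyping derivation showing that $B <: C$ entails that $\rel{B}{t}{t'}$ is, up to conversion, a subtype of $\rel{C}{t}{t'}$: for the base rules the two sides are $\beta$-equal (the $\Set$ case because $\hat{\Set_i}=\hat{\Set_j}=\Prop$) or differ only by a $\Type$-level, while \textsc{(Sub$_2$)} propagates through the product shape of $\llbracket\forall x\!:\!C.{-}\rrbracket$; then one concludes with \textsc{(Cum)} and \textsc{(Conv)}.

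The substantive cases are the product rules, \textsc{(App)}, and \textsc{(Abs)}. For \textsc{($\forall_1$)}/\textsc{($\forall_2$)}/\textsc{($\forall_3$)}, unfolding gives $\llbracket\forall x\!:\!A.B\rrbracket = \lambda f\,f'.\,\forall(x\!:\!A)(x'\!:\!A')(x_R\!:\!\rel{A}{x}{x'}).\,\rel{B}{(f\,x)}{(f'\,x')}$, and I must place the inner product in the sort $\hat{s}$, where $s$ is the sort of $\forall x\!:\!A.B$. The induction hypothesis gives $\rel{B}{(f\,x)}{(f'\,x')} : \hat{s}$; the quantifier over $\rel{A}{x}{x'}$ is admissible because that relation has sort $\hat{r}$ (with $r$ the sort of $A$), and those over $A,A'$ because $A,A'$ have sort $r$. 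The decisive observation is that when $s = \Set_i$ we have $\hat{s} = \Prop$, so the product lands in $\Prop$ by impredicative quantification (\textsc{($\forall_3$)}) — this is exactly where the design choice $\hat{\Set_i}=\Prop$ is essential. Two uses of \textsc{(Abs)} (over $f$ and $f'$) and one \textsc{(Conv)} finish the case. For \textsc{(App)}, the induction hypothesis $\llbracket M\rrbracket : \rel{\forall x\!:\!A.B}{M}{M'}$ is converted — using that this type is well-sorted, which needs the ``type of a type'' lemma and the induction hypothesis — into the product form, which I apply successively to $N$, $N'$ and $\llbracket N\rrbracket$; the result has type $\rel{B}{(M\,N)}{(M'\,N')}$ once Lemma~\ref{substLemma}(1)--(2) rewrite $\llbracket B\rrbracket[N/x][N'/x'][\llbracket N\rrbracket/x_R]$ as $\llbracket B[N/x]\rrbracket$. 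For \textsc{(Abs)}, the induction hypothesis on the body in $\llbracket\Gamma,x\!:\!A\rrbracket = \llbracket\Gamma\rrbracket,x\!:\!A,x'\!:\!A',x_R\!:\!\rel{A}{x}{x'}$, followed by three \textsc{(Abs)} and one \textsc{(Conv)}, delivers the claim.

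I expect the real difficulty to be organizational rather than conceptual: getting the statement strong enough (the three-way conjunction) so the induction closes, and discharging \emph{every} use of \textsc{(Conv)} and \textsc{(Cum)} by actually producing a sort for the type involved — which is why the structural lemmas about \cicr and the compatibility of $\llbracket\cdot\rrbracket$ with substitution, conversion and subtyping have to be in place before starting. The only step that uses something specific to this calculus, rather than generic bookkeeping, is the product case, where the impredicativity of $\Prop$ absorbs the relation attached to a $\Set$-valued product.
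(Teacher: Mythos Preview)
Your proposal is correct and follows the same route as the paper: a simultaneous induction on the typing derivation, with the first two conclusions handled by thinning/renaming and the third by a case analysis whose substantial cases are the product rules. Your write-up is in fact considerably more detailed than the paper's four-line sketch; the only point the paper singles out that you gloss over is that the axiom case \textsc{(Ax$_2$)} (typing $\llbracket\Set_i\rrbracket$ at $\Set_i\to\Set_i\to\Type_{i+1}$) genuinely needs cumulativity to lift $\Prop:\Type_1$ into $\Type_{i+1}$, so ``immediate from the product rules'' is slightly optimistic there, but this is bookkeeping rather than a gap.
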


\begin{proof}
  The proof is a straightforward
  induction
  on the derivation of $\Gamma \vdash A : B$. The first item is essentially
  proved by invoking structural rules and by propagating induction hypothesis.
  The key steps of the second items are the rule (\textsc{Ax$_2$}), which requires cumulativity, and
  the rules (\textsc{$\forall_1$}-\textsc{$\forall_3$}), which involve
  many abstraction and product rules.
\end{proof}

\subsection{\label{WhyDoesNotWork}Why does not it work directly in \cic ?}

In the syntactic theory of parametricity for dependent types presented in
\cite{DBLP:conf/icfp/BernardyJP10}, relations over a type of some universe
are implemented as predicates ranging in the same universe. This can be read
in the following piece of definition : $⟦\Type_i⟧ = λ(x\,x' : \Type_i).x → x' → \Type_i$. We cannot simply replace the conclusion with $\Prop$, because in
$\cic$ one has $⊢ \Type_i : \Type_{i+1}$, and the abstraction theorem
would require that $⊢ ⟦\Type_i⟧ : ⟦\Type_{i+1}⟧\,\Type_i\,\Type_{i}$
which is equivalent to
 $⊢  λ(x\,x' : \Type_i).x → x' → \Prop : \Type_i\, → \Type_{i} → \Prop$
but this last sequent is not derivable. In our refinement, $⟦\Prop⟧$ and $⟦\Set⟧$ have $\Prop$ as a conclusion,
but this is not a problem since we do not have $⊢ \Set_i : \Set_{i+1}$.

This refined calculus is very convenient to set the basis for
parametricity. As we argued, it has also nice properties regarding
realizability and extraction: as an example, the correctness of
extraction in this calculus would not rely on the termination of the $\beta$-reduction.
Even if possible, obtaining the same result directly in \cic would have
required a complete reworking of parametricity relations.

The calculus is very close to \cic, though. In Section~\ref{sec:tactic},
we discuss if it is possible to write a tactic in \coq that would
exploit this work, without changing \coq's calculus.

\subsection{Adding inductive types}

As a first step, we restrict ourselves to small eliminations: we do not
allow large eliminations. We will see in Subsection~\ref{OvercomeWE}
that we are actually able to handle large eliminations over a big
class of inductive definitions.

We write $Γ \we A : B$ to denote sequents typable in \cicr where large
eliminations are forbidden. Let us suppose that $\Ind^p(I:A, \arn{c :
  C}{k})$, we will define a fresh inductive symbol $⟦I⟧$ and a
family $(⟦c_i⟧)_{i=1...k}$ of fresh constructor names. Then we extend
\textsf{Definition \ref{Parametricity}} with
 \begin{align*}
   ⟦\fix(x:A).B⟧ = & \left(\fix(x_R:⟦A⟧\,x\,x').⟦B⟧\right) [\fix(x:A).B/x][\fix(x':A').B'/x']\\
   ⟦\case_I(M,\arn{Q}{p}, T,\arn{F}{n})⟧ = & \case_{⟦I⟧}(⟦M⟧,\arn{Q, Q',
     ⟦Q⟧}{p}, Θ_I(\arn{Q}{p},T,\arn{F}{n}),\arn{⟦F⟧}{n})
\end{align*}
where $Θ_I$ is defined in \textsf{\textbf{\small Figure~\ref{defTheta}}}.

\begin{figure*}
\begin{align*}
  Θ_I(\arn{Q}{p},T,\arn{F}{n}) = &\,λ\arn{(x:A)(x':A')(x_R:⟦A⟧\,x\,x')}{n}\,(a : I\,\arn{Q}{p}\,\arn{x}{n})
                 (a': I\,\arn{Q'}{p}\,\arn{x'}{n})\\
& \quad (a_R : ⟦I⟧\,\arn{Q\,Q'\,⟦Q⟧}{p}\,\arn{x\,x'\,x_R}{n} a\,a').\\
      & ⟦T⟧\,\arn{x\,x'\,x_R}{n}\,a\,a'\,a_R\,
                           \,(\case_I\,(a, \arn{Q}{p}, T, \arn{F}{n}))
                           \,(\case_{I}\,(a', \arn{Q'}{p}, T', \arn{F'}{n}))
\end{align*}
\caption{\label{defTheta} The definition of $Θ_I$}
\end{figure*}

We want to extend \textsf{Theorem \ref{Abstraction}}
with inductive definitions. We prove the following theorem:

\begin{thm}[\label{AbstractionInductive}Abstraction with inductive definitions]
 \begin{enumerate}
  \item  If $\Ind^p(I:A, \arn{c : C}{k})$ is a valid inductive definition
       then so is $\Ind^{3p}(⟦I⟧ : ⟦A⟧\,I\,I, \arn{⟦c⟧ : ⟦C⟧\,c\,c'}{k})$.
  \item If $Γ \we A : B$ then $⟦Γ⟧ \we A : B$, $⟦Γ⟧ \we A' : B'$, and
      $⟦Γ⟧ \we ⟦A⟧ : \rel{B}{A}{A'}$.
 \end{enumerate}
\end{thm}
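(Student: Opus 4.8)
The plan is to prove the two items of Theorem~\ref{AbstractionInductive} essentially in parallel with a single induction, extending the argument of Theorem~\ref{Abstraction} to cover the four new term formers and the new typing rules of Figure~\ref{fig:inductive_typing_rules}. First I would establish item~(1): given a valid $\Ind^p(I:A,\arn{c:C}{k})$, I must check that $\Ind^{3p}(⟦I⟧:⟦A⟧\,I\,I,\arn{⟦c⟧:⟦C⟧\,c\,c'}{k})$ satisfies the four well-formedness conditions. Freshness is by construction. For the arity condition, $⟦A⟧\,I\,I$ unfolds (using the computation of $⟦\bullet⟧$ on products and sorts, and $\hat{s}=\Prop$ for $s\in\{\Prop,\Set\}$) to a term convertible to $∀\arn{(x:P)(x':P')(x_R:⟦P⟧\,x\,x')}{p}\arn{(y:B)(y':B')(y_R:⟦B⟧\,y\,y')}{n}.\Prop$, which is a well-typed arity of conclusion $\Prop$ with $3p$ parameters. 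For strict positivity and the sort-checking condition~(4), I would invoke the substitution lemmas (\textsf{Lemma~\ref{substLemma}}) together with Theorem~\ref{Abstraction} applied to the bodies $E_j$, $D_j$: since $⟦C_j⟧\,c_j\,c_j'$ is built by replacing each binder by its three relational copies and then concluding with $⟦I\,\arn{x}{p}\,\arn{D_j}{n}⟧\,(c_j\cdots)\,(c_j'\cdots)$, which reduces to $⟦I⟧\,\arn{x\,x'\,x_R}{p}\,\arn{D_j\,D_j'\,⟦D_j⟧}{n}\,(c_j\cdots)(c_j'\cdots)$, the only occurrences of $⟦I⟧$ are as conclusions, inheriting strict positivity from that of $I$; the key technical point is that the relational copies $E_j'$ of the strictly-positive $E_j$ and the parametricity witnesses $⟦E_j⟧$ do not introduce bad occurrences, which follows because $⟦I⟧$ occurs in $⟦E_j⟧$ only where $I$ occurs in $E_j$, i.e.\ only in conclusion position.

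Next, for item~(2), I would proceed by induction on the derivation $Γ\we A:B$, exactly as in Theorem~\ref{Abstraction}, adding one case per new rule. The three sub-statements ($⟦Γ⟧\we A:B$, $⟦Γ⟧\we A':B'$, $⟦Γ⟧\we ⟦A⟧:\rel{B}{A}{A'}$) are proved simultaneously so that the induction hypothesis is available in all three forms for subderivations. For the rules \textsc{(Ind)} and \textsc{(Constr)}, the first two sub-statements are immediate from weakening (the constants $I$, $c_j$ are global), and the third is exactly the assertion $⊢⟦I⟧:⟦A⟧\,I\,I$, resp.\ $⊢⟦c_j⟧:⟦C_j⟧\,c_j\,c_j'$, which is the content of item~(1) specialized to the freshly declared symbols. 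For \textsc{(Fix)}, given $Γ,f:A\we M:A$ with $f$ guarded, the inductive hypothesis gives $⟦Γ⟧,f:A,f':A',f_R:⟦A⟧\,f\,f'\we ⟦M⟧:⟦A⟧\,f\,f'$; I then substitute $\fix(f:A).M$ for $f$ and $\fix(f':A').M'$ for $f'$ — justified by subject reduction and the substitution lemma — to obtain $⟦Γ⟧\we (\fix(f_R:⟦A⟧\,x\,x').⟦M⟧)[\dots/f][\dots/f']:\rel{A}{\fix(f:A).M}{\fix(f':A').M'}$, and I must check that $f_R$ is still guarded: this holds because $⟦\bullet⟧$ preserves the recursive structure and the relational copy of the recursive argument is still structurally decreasing (its type $⟦I⟧\cdots$ is inductive with the same constructor pattern).

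The main obstacle, and the case requiring the most care, is \textsc{(Case)}: one must verify that
$\case_{⟦I⟧}(⟦M⟧,\arn{Q,Q',⟦Q⟧}{p},Θ_I(\arn{Q}{p},T,\arn{F}{n}),\arn{⟦F⟧}{n})$
is well-typed with type $\rel{T\,\arn{G}{n}\,M}{\case_I(M,\dots)}{\case_I(M',\dots)}$. Here I would unfold the \textsc{(Case)} rule for $⟦I⟧$: the scrutinee $⟦M⟧$ has type $⟦I\,\arn{Q}{p}\,\arn{G}{n}⟧\,M\,M'$, which reduces to $⟦I⟧\,\arn{Q\,Q'\,⟦Q⟧}{p}\,\arn{G\,G'\,⟦G⟧}{n}\,M\,M'$ by the substitution lemma and the inductive hypotheses on $M$, $\arn{Q}{p}$, $\arn{G}{n}$; the motive $Θ_I(\arn{Q}{p},T,\arn{F}{n})$ must have the type demanded of $T$ in the \textsc{(Case)} rule instantiated at $⟦I⟧$, and unwinding its definition in Figure~\ref{defTheta} this reduces to checking, via the inductive hypothesis on $T$ (which gives $⟦T⟧:\rel{∀\arn{y:B[\vec Q/\vec x]}{n}.I\,\vec Q\,\vec y\to r}{T}{T'}$) and on the $F_j$, that the stated application typechecks — the delicate points being the bookkeeping of the $3p$ parameters versus the $p$ original ones, the nested products over the $x,x',x_R$ triples, and the appearance of the original $\case_I(a,\dots)$ and $\case_I(a',\dots)$ inside $Θ_I$ whose types are supplied by the first two sub-statements of the induction hypothesis. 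Finally I would check that the branches $⟦F_j⟧$ have the types required by the \textsc{(Case)} rule for $⟦I⟧$, which follows from the inductive hypothesis on $F_j$ combined with the substitution lemma relating $⟦E_j[\vec Q/\vec x]⟧$, $⟦D_j[\vec Q/\vec x]⟧$ to the substituted forms, and the restriction to small eliminations (no $\Type$ as $r$) so that the $\iota$-reduction on $⟦I⟧$ matches the $\iota$-reduction on $I$ under $⟦\bullet⟧$, making the whole construction well-sorted. The side conditions on eliminations (restriction~(\ref{eq:restr1})) transfer because $⟦\bullet⟧$ sends a single-constructor $\Prop$-inductive with propositional arguments to $⟦I⟧$ with the same shape.
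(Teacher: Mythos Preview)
Your proposal is correct and follows essentially the same approach as the paper: verify the inductive well-formedness conditions (typing and strict positivity) for item~(1), then extend the induction of Theorem~\ref{Abstraction} to the rules \textsc{(Ind)}, \textsc{(Constr)}, \textsc{(Fix)} (checking the guard condition is preserved) and \textsc{(Case)} for item~(2). The paper singles out as the ``key idea'' that the translation of a term using only small eliminations itself uses only small eliminations---concretely, since $⟦I⟧$ lands in $\Prop$ and $\hat r=\Prop$ for $r\in\{\Prop,\Set\}$, the translated $\case_{⟦I⟧}$ is always a $\Prop$-to-$\Prop$ elimination and hence unrestricted---which is the precise reason your \textsc{(Case)} case goes through, and which you allude to but could state more directly.
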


\begin{proof}
  The first item requires
  to check the constraints to build inductive types: the typing and the
  strict positivity. As for \textsf{Theorem \ref{Abstraction}}, the second item
  is proved by induction on the structure of the proof of $\Gamma \vdash
  A : B$. One needs to check that the guard condition is preserved in
  the (\textsc{Fix}) rule and that the (\textsc{Case}) rule is
  well-formed. The key idea here is that the translation of terms
  containing only small eliminations also contains only small
  eliminations.
\end{proof}

\subsection{\label{OvercomeWE}Overcoming the restriction over large
  elimination}

Suppose we now authorize the whole large elimination (with
restriction~(\ref{eq:restr1})). The definition generated by the following
inductive definition $\Ind^0(\Boxy_i : \Set_{i+1}, \Close_i : \Set_i → \Boxy_i)$ is
\begin{align*}
\Ind^0\big(&⟦\Boxy_i⟧ : \Boxy_i → \Boxy_i → \Prop, \\
       & ⟦\Close_i⟧ : ∀(A\,A':\Set_i).(A → A' → \Prop) →
 ⟦\Boxy_i⟧\,(\Close_i\,A)\,(\Close_i\,A')\big)
\end{align*}
If we want to prove parametricity for the (\textsc{Case}) rule when we
build a $\Type$, one should provide an inhabitant of:
$∀(A\,A':\Set_i).⟦\Boxy_i⟧\,(\Close_i\,A)\,(\Close_i\,A') → (A → A' → \Prop)$.
But since $⟦\Boxy_i⟧\,(\Close_i\,A)\,(\Close_i\,A')$ has type $\Prop$ and $A →
A' → \Prop$ has type $\Type$, we cannot build the expected relation by
deconstructing a proof of $⟦\Boxy_i⟧\,(\Close_i\,A)\,(\Close_i\,A')$: this is
forbidden by restriction~(\ref{eq:restr1}).

However, let us consider the following example:
$$\Ind^0 (I : \Set, \mathrm{N} : \Nat \to I, \mathrm{B} : \Bool \to I)$$
Let say we need to translate the following large elimination (for the
sake of readability, we present it with the \coq syntax):
\begin{lstlisting}
Definition f (x : I) := match x with
  | N n => vector n
  | B b => nat
end.
\end{lstlisting}
We can swap the destruction of $x_R$ for two nested destructions of
$x$ and $x'$ which produces $k^2$ branches (where $k$ in the number of constructors). But only $k$ of them are actually possible
(we use here the \texttt{Program} keyword in order to let the system infer dependent type annotations for each \texttt{match}):
\begin{lstlisting}
Program Definition f_R (x x' : I) (x_R : ⟦I⟧ x x') :=
  match x with
  | N n => match x' with
    | N n' => let n_R := inv n n' x_R in ⟦vector n⟧
    | B b' => absurd (vector n ->nat ->Prop) (abs$_{12}$ n b' x_R)
    end
  | B b => match x' with
    | N n' => absurd (nat ->vector n' ->Prop) (abs$_{21}$ b n' x_R)
    | B b => ⟦nat⟧
    end
  end.
\end{lstlisting}
where the following terms are all implemented with an authorized large
elimination:
\begin{eqnarray*}
  \texttt{inv} & :& ∀ (n\,n' : \Nat). ⟦I⟧\,(\mathrm{N}\,n)\,(\mathrm{N}\,n') → ⟦\Nat⟧\,n\,n'\\
  \texttt{abs}_{12} & :&  ∀ (n : \Nat) (b' : \Bool). ⟦I⟧\,(\mathrm{N}\,n)\,(\mathrm{B}\, b') → \False \\
  \texttt{abs}_{21} & :&  ∀ (b : \Bool) (n' : \Nat).  ⟦I⟧\,(\mathrm{B}\,b)\,(\mathrm{N}\, n') → \False \\
  \texttt{absurd} & :& ∀ (α : \Type).\False → α
\end{eqnarray*}
We notice that this example runs smoothly because all the arguments of
all the constructors have type $\Prop$ or $\Set$, which avoids the
pitfall of the $\Boxy$ example.

That is why we propose to restrict large elimination from $\Set$ to
$\Type$ to the class of small inductive definitions (this class was
introduced by Paulin in \cite{springerlink:10.1007/BFb0037116} to
restrict the large elimination in vanilla \coq where the sort $\Set$ of
informative types is impredicative):
\begin{dfn}[Small inductive definitions]
 We say that $\Ind^p(I:A, \arn{c : C}{k})$ is a \emph{small
 inductive definition} if all the arguments of each constructor
 are of sort $\Prop$ or $\Set_m$ for some $m$. More formally,
 if for all $1 \leq i \leq k$,
$⊢ c_i : ∀\arn{(x:P)}{p}\arn{(y:B)}{n_i}.I\,\arn{x}{p}\,\arn{D_j}{n}$
 then $\arn{x:P}{p},\arn{y:B}{j-1} ⊢ B_j : r$ with $r = \Prop$ or
 $r = \Set_m$ for some $m$.
\end{dfn}

With this restriction, the abstraction theorem holds in presence of
large elimination:
\begin{thm}
  $\textsf{Theorem~\ref{AbstractionInductive}}$ holds
  when $\we$ stands for derivability where large elimination
  is authorized over small inductive definitions and forbidden
  otherwise.
\end{thm}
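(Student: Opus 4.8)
The plan is to re-run the induction of the proof of \textsf{Theorem~\ref{AbstractionInductive}} on the derivation of $\Gamma \we A : B$; every case stays literally the same except the (\textsc{Case}) rule applied to a \emph{large} elimination, which is the only place where the calculus has grown. First I would observe that item~1 is unaffected, since smallness of $I$ plays no role in the typing or the strict positivity of $⟦I⟧$. Then I would strengthen the invariant carried along the induction to: \emph{the translation of a term all of whose eliminations are authorized (small, or large over a small inductive) uses only authorized eliminations}; almost all of the work is to re-establish this invariant in the new (\textsc{Case}) case while producing a well-typed term.

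Concretely, suppose $\Ind^p(I:A,\arn{c:C}{k})$ is a small inductive definition and the last rule is (\textsc{Case}) with $T$ concluding in $\Type$, typing $\case_I(M,\arn{Q}{p},T,\arn{F}{n}) : T\,\arn{G}{n}\,M$. The clause of \textsf{Definition~\ref{Parametricity}} is unusable: $⟦I⟧$ concludes in $\Prop$ (because $\hat{\Set}=\Prop$) and, being the translation of a small inductive, carries $\Set$-typed constructor arguments, so it violates restriction~(\ref{eq:restr1}) and $\case_{⟦I⟧}$ cannot eliminate into $\Type$. Following the example of Subsection~\ref{OvercomeWE}, I would set $⟦\case_I(M,\arn{Q}{p},T,\arn{F}{n})⟧$ to be the term obtained by two nested eliminations, of $M$ and of $M'$ --- licit because $I$ is small, so large elimination over $I$ and $I'$ is authorized --- producing $k^2$ branches. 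On each of the $k$ diagonal branches, where $M$ reads $c_j\,\arn{Q}{p}\,\arn{m}{n_j}$ and $M'$ reads $c_j\,\arn{Q'}{p}\,\arn{m'}{n_j}$, the informative data $\arn{m}{n_j}$ and $\arn{m'}{n_j}$ are already in scope; it only remains to recover the relational witnesses $\arn{m_R}{n_j}$, of types $⟦E_{j,l}[\dots]⟧\,m_l\,m'_l$, from $x_R : ⟦I⟧\,\dots\,(c_j\dots)\,(c_j\dots)$ through an inversion term $\inv_j$, and the branch returns $⟦F_j⟧\,\arn{m\,m'\,m_R}{n_j}$. On each of the $k^2-k$ off-diagonal branches $(c_i,c_j)$ with $i\neq j$, the hypothesis $x_R : ⟦I⟧\,\dots\,(c_i\dots)\,(c_j\dots)$ is contradictory: an absurdity term $\mathtt{abs}_{ij}$ turns it into a proof of $\False$, to which I apply $\mathtt{absurd} : \forall(\alpha:\Type).\False\to\alpha$.

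The next step is to check that this construction stays in the authorized fragment, and here smallness of $I$ is exactly what is needed: each $E_{j,l}$ has sort $\Prop$ or $\Set$, hence each $⟦E_{j,l}[\dots]⟧\,m_l\,m'_l$ has sort $\Prop$, so $\inv_j$ has a $\Prop$ codomain (an iterated $\Prop$-valued existential packaging the $\arn{m_R}{n_j}$) and is therefore a \emph{small} elimination of the $\Prop$-valued $⟦I⟧$ --- always allowed --- and similarly for $\mathtt{abs}_{ij}$; both are built by dependent case analysis on the proof of $⟦I⟧$ with a motive defined by matching on the two $I$-indices and valued in $\Prop$, that motive being a large elimination of $I$ and $I'$, authorized since $I$ is small, which also supplies the needed constructor no-confusion for free; and $\mathtt{absurd}$ is the empty elimination of $\False$, satisfying~(\ref{eq:restr1}) with $k=0$. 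This is exactly what breaks for $\Boxy$, whose constructor $\Close_i$ carries an argument of type $\Set_i$, a \emph{sort}, so $\Boxy_i$ is not small and recovering that $\Type$-level argument would demand a forbidden large elimination of $⟦\Boxy_i⟧$. I would then finish with the routine verifications from \textsf{Theorem~\ref{AbstractionInductive}}: that the $k^2$ branches have the type prescribed by the dependent-elimination motive (on a diagonal branch, using the $\iota$-reduction $\case_I(c_j\dots,\dots)\rhd F_j\,\arn{m}{n_j}$, the induction hypothesis $⟦F_j⟧ : ⟦\forall\arn{(z:E_j[\arn{Q}{p}/\arn{x}{p}])}{n_j}.T\,\arn{D_j[\dots]}{n}\,(c_j\dots)⟧\,F_j\,F_j'$, and \textsf{Lemma~\ref{substLemma}} for the substitutions, then (\textsc{Conv}) up to $\beta\iota$; off-diagonal branches being immediate); and that the (\textsc{Fix}) case still goes through, the guard condition being preserved because the witnesses extracted by $\inv_j$ are $\iota$-reducts of the recursive argument of the translated fixpoint, just as in \textsf{Theorem~\ref{AbstractionInductive}} they were subterms exposed by the corresponding $\case_{⟦I⟧}$. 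I expect the main obstacle to be precisely this last bit of bookkeeping --- getting the dependent eliminations and their motives to line up so that the diagonal branches have the exact required type up to conversion --- which is the work the $\texttt{Program}$ annotation hides in the example of Subsection~\ref{OvercomeWE}.
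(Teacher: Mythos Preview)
Your proposal is correct and follows exactly the approach the paper sketches: the paper does not give a separate proof of this theorem but relies entirely on the worked example preceding it (the nested $k^2$-branch elimination with diagonal inversions $\inv_j$ and off-diagonal absurdities $\mathtt{abs}_{ij}$, all realized via authorized large eliminations of the small inductive $I$). You have faithfully turned that example into a general argument, and your analysis of \emph{why} smallness is precisely the hypothesis needed---so that the inversion targets land in $\Prop$ and the motive-defining large eliminations are over $I$ itself---is the point the paper makes with the contrast between $\Boxy$ and the $\mathrm{N}/\mathrm{B}$ example.
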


\section{Examples of ``free theorems''}\label{sec:examples}

In this section we give a few examples of consequences of the abstraction
theorem. Most examples that can be found in the literature (see for
instance \cite{Wadler89, DBLP:conf/icfp/BernardyJP10})
may be easily implemented in our framework.
To improve readability, we use "$=_α$" and "$∃x:α$" to denote respectively
standard inductive encodings of the Leibniz equality and existential quantifier.

\subsection{The type of Church numerals}\label{sec:examples:church}

Let $\Church_i$ be $∀α:\Set_i, (α → α) → α → α$, the type
of Church numerals. Let $\Iter_i$ be the following expression
\begin{align*}
 \fix\,\Iter_i : \Nat → \Church_i.
& λ (n : \Nat)(α : \Set_i)(f : α → α)(z: α). \\
 & \case(n,λk:\Nat.α,z,λp:\Nat.f\,(\Iter_i\,p\,α\,f\,z))
\end{align*}
which is the primitive recursive operator which composes a function
$n$ times with itself.

The relation $⟦\Church_i⟧ : \Church_i → \Church_i → \Prop$ is the
relation unfolded in the introduction. One can prove easily the following
property on any $f : \Church_i$:
\[
  ⟦\Church_i⟧\,f\,f →  ∃ n : \Nat.
  ∀ (α : \Set_i) (g : α → α) (z:α). \Iter_i\,n\,α\,g\,z =_α f\,α\,g\,z
\]
which states that, if $f$ is in relation with itself by $⟦\Church_i⟧$,
then there exists an integer $n$ such that $f$ is
extensionally equal to $\Iter_i\,n$. Now suppose we have a
closed term $F$ such that $⊢ F : \Church_i$. By the abstraction theorem
we obtain a proof $⟦F⟧$ that $⟦\Church_i⟧\,F\,F$ and therefore that $F$
is extensionally equal to $\Iter_i\,n$ for some $n$.

\subsection{The tree monad}\label{sec:examples:tree}

Binary trees carrying information of type $α$ on their leaves may be
implemented by the following inductive definition :
$$\Ind^1(\Tree_i : \Set_i → \Set_{i+1},
       \Leaf_i : ∀α:\Set_i. α → T\,α,
       \Node_i : ∀α:\Set_i. T\,α → T\,α → T\,α)$$
and it is possible to represent in \cic the function $\Map_i$ of type
$∀(α\,β:\Set_i). (α → β) → \Tree_i\,α → \Tree_i\,β$ which maps a
function to all the leaves of a tree.

The generated relation $⟦\Tree_i⟧$
tells that two trees are related if they have the same shape and
elements at the same position in each tree are related. It is then not
difficult to prove for any function $f : α → α'$ that $⟦\Tree_i⟧\,α\,α'\,R_f$
is a relation representing the graph of the $\Map$ function where
$R_f$ is $λ(x:α)(x':α').f x =_{α'} x'$ and represents the graph of $f$.

We can also define in the system the multiplication of the monad by
programming an expression $\mu_i$ of type $∀α.\Tree_i\,(\Tree_i\,α) →  \Tree_i α$ with the following computational behavior:
 $$ \mu_i\,α\,(\Leaf_i\,α\,x) ≡  \,x \hspace{2em}\text{ and }\hspace{2em}
  \mu_i\,α\,(\Node_i\,α\,x\,y) ≡  \,\Node_i\,α\,(\mu_i\,α\,x)\,(\mu_i\,α\,y)$$
As $\mu_i$ is closed, an application of the abstraction theorem which
instantiates the relation to the graph of $f$ proves the naturality of
$\mu_i$.

\subsection{Parametricity and algebra}\label{sec:examples:algebra}

Obtaining ``free theorems'' by parametricity can be extended to data
types with structure. In this section, we take the example of finite
groups, which is directly related to the \ssreflect
library~\cite{DBLP:conf/tphol/GonthierMRTT07} developed in \coq; but our
reasoning applies to a large variety of algebraic structures.

In Chapter 3.4 of his PhD. thesis~\cite{Garillot11}, François Garillot
observed that algebraic developments require lots of proofs by
isomorphism, which often look similar. Intuitively, a polymorphic function
operating on groups can only compose elements using the laws given by
the group's structure, and thus cannot create new elements.

More formally, we take an arbitrary group $\mathcal{H}$ defined by a
carrier $\alpha : \Set_0$, a unit element $\e : \alpha$, a composition
law $\cdot : \alpha → \alpha → \alpha$, an inverse function $\inv :
\alpha → \alpha$, and the standard axioms stating that $\cdot$ is
associative, $\e$ is neutral on the left and composing with the inverse
on the left produces the unit. On top of this, we define the type of all
the finite subgroups of $\mathcal{H}$ with the following one-constructor
inductive definition:
\begin{equation*} \begin{array}{l}
\Ind^0 \Big(\fingrp : \Set_0, \Fingrp : ∀ \elements: \List \alpha. \\
\qquad \qquad \e ∈ \elements →\\
\qquad \qquad (∀ x\,y. x ∈ \elements → y ∈ \elements → x \cdot y ∈ \elements) → \\
\qquad \qquad (∀ x. x ∈ \elements → \inv x ∈ \elements) → \fingrp\Big)
\end{array}
\end{equation*}
where $∈ : \alpha → \List \alpha → \Prop$ is the standard inductive
predicate stating if an element appears in a list.

Suppose we have a closed term $Z : \fingrp → \fingrp$ (examples of such
terms abound: eg. the center, the normalizer, the derived
subgroup\dots). The abstraction
theorem states that for any $R : \alpha → \alpha → \Prop$ compatible
with the laws of $\mathcal{H}$ and for any $G\,G' : \fingrp$, $⟦\fingrp⟧_R\,G
\,G' → ⟦\fingrp⟧_R\,(Z\,G)\,(Z\,G')$ where $⟦\fingrp⟧_R$ is the relation
on subgroups induced by $R$. Given this, we can prove the following
properties:
\begin{itemize}
\item for any $G$, $Z\,G \subset G$ (if we take $R : x\,y \mapsto x \in G$);
\item for any $G$, for any $\phi$ a morphism of $\mathcal{H}$,
  $\phi(Z\,G) = Z\,\phi(G)$ (if we take $R : x\,y \mapsto y = \phi(x)$).
  It entails that $Z\,G$ is a \emph{characteristic subgroup} of
  $\mathcal{H}$.
\end{itemize}
To prove this, we use the axiom of proof irrelevance (that can be safely
added to the system as we will show in the next subsection). The proof
is straightforward by unfolding the definitions. A complete \coq script
can be found online~\cite{implem12}.

\subsection{Classical axioms}\label{sec:examples:axioms}

One interesting feature of \coq is the ability to add axioms in the
system. However when the parametricity transformation $⟦\cdot⟧$ will
encounter the axiom, it will ask for a proof that it is related to
itself. Let consider an axiom $P$ such that $⊢ P : s$ where $s$ is $\Prop$ or $\Set$.
Here three situations are possible:
\begin{itemize}
  \item Either $P$ is what we call \emph{provably parametric}: the user
   can provide a proof of $∀ h : P. ⟦P⟧\,h\,h$ and this proof may be used
   by the abstraction theorem to prove parametricity for terms involving
   the axiom.
  \item Or $P$ is \emph{provably not parametric}: there exists a proof
   that $∀ (h\,h' : P). ¬(⟦P⟧\,h\,h')$. It means that the axiom would break the
   parametricity of the system: there is no way to invoke the abstraction
   theorem on a term which uses that axiom.
  \item Or it is neither provably parametric nor provably not parametric
   or the user does not know. In this case, the parametricity of the
   axiom may be added as a new axiom at the user's risk.
\end{itemize}
Note that if $¬P$ is provable then $P$ is both provably parametric and
provably not parametric and by the abstraction theorem, if $P$ is provable
then it is of course provably parametric. It is also easy to deduce from the
abstraction theorem that if $P → Q$ is provable then
 $P$ provably parametric implies $Q$ provably parametric,
    and $Q$ provably not parametric implies $P$ provably not parametric.
Hence these notions do not depend on the formulation of your axioms.

\subsubsection{Proof irrelevance}\label{sec:examples:axioms:pi}

The axiom of \emph{proof irrelevance}
$\PI =  ∀ (X:\Prop)(p\,q : X), p =_X q$
states that there is at most one proof
of any proposition. It is provably parametric since
\begin{align*}
&⟦\PI⟧\,h\,h' =  ∀ (X\,X':\Prop)\,(X_R:X→ X'→ \Prop) \\
&\qquad  (p:X)(p':X')(p_R : X_R\,p\,p')
 (q:X)(q':X')(q_R : X_R\,q\,q'). ⟦\EqP⟧\,X\,X'\,p\,p'\,p_R\,q\,q'\,q_R
\end{align*}
may be proved (with $\PI$) equivalent to
\begin{align*}
& ∀ (X\,X':\Prop)\,(X_R:X→ X'→ \Prop)
  (p:X)(p':X')(p_R : X_R\,p\,p'). \\
& \qquad \quad ⟦\EqP⟧\,X\,X'\,p\,p'\,p_R\,p\,p'\,p_R
\end{align*}
which is directly provable by $⟦\ReflP⟧$.
Therefore $\PI$ may be safely added to the system.

\subsubsection{Independence of the law of excluded middle}\label{sec:examples:axioms:independence}

From a user perspective provably not parametric axioms are bad news, but
it provides meta-theoreticians a very simple way to prove independence results.
Indeed, if a formula is provably not parametric then the abstraction theorem
tells you this formula is not provable without large elimination over
not small inductive definitions.

\begin{lem}
  If $P$ is provably not parametric, there is no closed term
  $A$ of type $P$ (in the restriction of large elimination to small
  inductive definitions).
\end{lem}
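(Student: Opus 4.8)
The plan is to argue by contradiction, combining the abstraction theorem with the coherence of \cicr established by the embedding lemma. Assume that there is a closed term $A$ with $\we A : P$, where $\we$ now denotes derivability in which large elimination is authorized over small inductive definitions and forbidden otherwise.

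First I would record two trivial facts about a closed $A$: its typing context is empty, and $A' = A$, since the priming operation only renames free variables and $A$ has none. Now apply the abstraction theorem in the strengthened form of the last theorem (Theorem~\ref{AbstractionInductive} with $\we$ read as derivability allowing large elimination over small inductive definitions) to the judgment $\we A : P$. It yields $\we ⟦A⟧ : \rel{P}{A}{A}$, a closed proof — still inside the restricted fragment — that $A$ is parametrically related to itself. This is the only place where the hypothesis on large eliminations is really used: the theorem guarantees that $⟦A⟧$ lives again in the restricted fragment, because the translation turns small eliminations into small eliminations and the added clause handles large eliminations over small inductive definitions.

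Next I would use that $P$ is provably not parametric: we are handed a proof $N$ of $\forall (h\,h' : P).\, \neg(\rel{P}{h}{h'})$, i.e. $N : \forall (h\,h':P).\, \rel{P}{h}{h'} \to \False$. Instantiating $N$ at $h := A$ and $h' := A$ and applying the result to $⟦A⟧$ produces a closed term $N\,A\,A\,⟦A⟧$ of type $\False$, living in \cicr.

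Finally, a closed proof of $\False$ cannot exist: by the embedding lemma it would map to a closed \cic proof of $|\False|$, hence (by case analysis on the zero-constructor inductive type) to a closed \cic proof of $\forall X:\Prop.X$, contradicting the logical coherence of \cic. This contradiction shows there is no closed $A$ of type $P$ in the restricted fragment. I do not expect a genuine obstacle here; the only point requiring care is to invoke the abstraction theorem in its version adapted to the restricted fragment, since in full \cicr the translation $⟦A⟧$ could a priori require large eliminations over non-small inductive definitions, which the restricted fragment forbids.
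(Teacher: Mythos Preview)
Your argument is correct and is exactly the one the paper has in mind: the paper does not spell out a proof of this lemma, but the sentence immediately preceding it (``if a formula is provably not parametric then the abstraction theorem tells you this formula is not provable without large elimination over not small inductive definitions'') is precisely the contradiction you unfold---apply the abstraction theorem to a putative closed $A:P$ to get $⟦A⟧:\rel{P}{A}{A}$, feed this to the witness of non-parametricity to obtain a closed proof of $\False$, and conclude by coherence via the embedding into \cic. Your care in noting $A'=A$ for closed $A$ and in invoking the version of the abstraction theorem adapted to the restricted fragment is appropriate and matches what the paper implicitly relies on.
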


For instance, Peirce's law  $ \NNPP =  ∀ (X\,Y:\Prop). ((X → Y ) → X)  → X$
(which is known to be equivalent to the
 excluded middle) is provably not parametric.

\section{Towards a \coq implementation}\label{sec:tactic}

This paper sets the theoretical foundation for an implementation of a
reflexive \coq tactic generating the consequences of parametricity for
definitions in the Calculus of Constructions. Two approaches are
possible:
\begin{itemize}
\item modify \coq's calculus to implement \cicr. The implementation of
  the translation becomes straightforward;
\item do not modify \coq's calculus, but let the translation distinguish
  informative terms.
\end{itemize}
The first approach would require to transform \coq radically. We followed
the second approach, and started the implementation of a prototype for
\coq commands and tactics for parametricity, called \coqparam
\cite{implem12}.

In a system like \coq, \emph{reflection} establishes a correspondence
between:
\begin{itemize}
\item a subset of the \coq terms: this is called the \emph{shallow
    embedding};
\item a \coq inductive data type representing these terms: this is
  called the \emph{deep embedding};
\item the \ocaml internal representation of those terms.
\end{itemize}
The deep embedding and the \ocaml representation give access to the
structure of the terms (whereas the shallow embedding does not), which
is very useful to build properties and proofs by computing over this
structure. This process, called \emph{computational reflection}, is a
well-known way to design powerful automatic tactics in
\coq~\cite{DBLP:conf/tphol/GregoireM05,GregoireTW06,DBLP:conf/cpp/ArmandFGKTW11}.

Parametricity comes well within the spirit of computational reflection:
the abstraction theorem is a way to build proofs of terms by inspecting
their structures. Our tactic is based on this remark: given a well-typed
closed term $\vdash A : B$, it builds the well-typed proof $\vdash
\llbracket A \rrbracket : \llbracket B \rrbracket\,A\,A$, going from the
shallow embedding to the \ocaml internal representation (this step is
called \emph{reification}), and the other way round. The difficulty is
to decide, during reification, whether objects of type $\Type$ in \coq
should have type $\Set$ or $\Type$ in \cicr. The tactic does not handle
this yet (as well as full inductive types).

Notice that, with this method, we do not have to generally prove the
abstraction theorem in \coq: \coq's type checker will prove it on each
instance. One may also be interested in a formal proof of the abstraction
theorem. It means that the deep embedding should be defined. As the
refinement is very close to \coq, this would thus require a large
effort.

\section{Related works and discussion}

Since the introduction of parametricity for system
F~\cite{DBLP:conf/ifip/Reynolds83,Wadler89}, it has been
extended to many logical systems based on Type Theory. Among others, we
can cite system $\Fw$ by Vytiniotis and
Weirich~\cite{VytiniotisW10} and a large subset of PTSs by
Bernardy \emph{et al.}~\cite{DBLP:conf/icfp/BernardyJP10,DBLP:conf/fossacs/BernardyL11}.
In all these presentations, no sort is impredicative, and parametricity
relations live either in a meta-logic or in a different sort than
propositions. To our knowledge, this is the first time parametricity
relations live in an impredicative sort representing propositions,
making them more usable in a system like \coq.

Bernardy \emph{et al.}~\cite{DBLP:conf/icfp/BernardyJP10} also explain two possible ways to
handle inductive definitions: one by translating induction principles,
and one by defining a new inductive data-type as the translation of the
initial data-type. Our approach is close to the second method proposed
by \cite{DBLP:conf/icfp/BernardyJP10}. We also show how to translate
fixpoint definitions, which are more common than inductive principles.

Parametricity and parts of the abstraction theorem have been formalized
for deep embeddings of logical systems in
\agda~\cite{DBLP:conf/icfp/BernardyJP10} and in
\coq~\cite{DBLP:conf/tlca/Atkey09,Atkey09b}. Our approach is different:
we do not want to have a formal proof of the abstraction theorem (in a
first step), but we want to have a practical tool that actually computes
results produced by the abstraction theorem. This does not compromise
soundness anyway, since the terms produced by this tool are type-checked
by \coq's kernel.

\section{Conclusion}

As we argue throughout the article, the system presented here
distinguishes clearly via typing which expressions will be computationally
meaningful after extraction. It allows us to define a notion of parametricity
for which relations lie in the sort of propositions. This opens up a new way
to define automatic tactics in interactive theorem provers based on Type
Theory.

Moreover it is known that parametricity and realizability seen as syntactic
constructions are closely related \cite{DBLP:conf/fossacs/BernardyL11}. That is why it seems
possible to build an internal realizability theory inside our framework. It
would permit to develop a similar tactic to prove automatically that program
extracted from any closed term will realize its own type. The user would then
be able to use this proof to show the correctness of his programs without
relying on the implementation of the extraction function.

Finally, it remains to understand why parametric relations do not fit in the
sort of proposition in presence of large elimination on non-small data types.
We conjecture that parametric relations for large inductive definitions are
not proof-irrelevant (in particular, they cannot be interpreted as
set-theoretical relations).

\subparagraph*{Acknowledgments}

The authors are particularly grateful to François Garillot and Georges
Gonthier who suggested the use of parametricity to obtain theorems from
free in the setting of algebra, and provided the stimulus for this work.
We also thank Assia Mahboubi for providing useful help about the spirit
of the Ssreflect library. We finally thank the anonymous reviewers for
their encouragements and constructive remarks.


\bibliography{biblio}

\begin{thebibliography}{10}

\bibitem{implem12}
{Preliminary implementation of a Coq tactic}.
\newblock
  \url{http://www.lix.polytechnique.fr/~keller/Recherche/coqparam.html}.

\bibitem{DBLP:conf/cpp/ArmandFGKTW11}
Micha{\"e}l Armand, Germain Faure, Benjamin Gr{\'e}goire, Chantal Keller,
  Laurent Th{\'e}ry, and Benjamin Werner.
\newblock {A Modular Integration of SAT/SMT Solvers to Coq through Proof
  Witnesses}.
\newblock In Jean-Pierre Jouannaud and Zhong Shao, editors, {\em CPP}, volume
  7086 of {\em Lecture Notes in Computer Science}, pages 135--150. Springer,
  2011.

\bibitem{Atkey09b}
Robert Atkey.
\newblock {A deep embedding of parametric polymorphism in Coq}.
\newblock In {\em Workshop on Mechanizing Metatheory}, 2009.

\bibitem{DBLP:conf/tlca/Atkey09}
Robert Atkey.
\newblock {Syntax for Free: Representing Syntax with Binding Using
  Parametricity}.
\newblock In Pierre-Louis Curien, editor, {\em TLCA}, volume 5608 of {\em
  Lecture Notes in Computer Science}, pages 35--49. Springer, 2009.

\bibitem{DBLP:conf/icfp/BernardyJP10}
Jean-Philippe Bernardy, Patrik Jansson, and Ross Paterson.
\newblock {Parametricity and dependent types}.
\newblock In Paul Hudak and Stephanie Weirich, editors, {\em ICFP}, pages
  345--356. ACM, 2010.

\bibitem{DBLP:conf/fossacs/BernardyL11}
Jean-Philippe Bernardy and Marc Lasson.
\newblock {Realizability and Parametricity in Pure Type Systems}.
\newblock In Martin Hofmann, editor, {\em FOSSACS}, volume 6604 of {\em Lecture
  Notes in Computer Science}, pages 108--122. Springer, 2011.

\bibitem{DBLP:conf/lics/Coquand86}
Thierry Coquand.
\newblock {An Analysis of Girard's Paradox}.
\newblock In {\em LICS}, pages 227--236. IEEE Computer Society, 1986.

\bibitem{DBLP:conf/colog/CoquandP88}
Thierry Coquand and Christine Paulin.
\newblock {Inductively defined types}.
\newblock In Per Martin-L{\"o}f and Grigori Mints, editors, {\em Conference on
  Computer Logic}, volume 417 of {\em Lecture Notes in Computer Science}, pages
  50--66. Springer, 1988.

\bibitem{Garillot11}
François Garillot.
\newblock {\em {Generic Proof Tools and Finite Group Theory}}.
\newblock PhD thesis, {\'E}cole Polytechnique, 2011.

\bibitem{Geuvers01}
Herman Geuvers.
\newblock Inconsistency of classical logic in type theory.
\newblock Unpublished notes, 2001.

\bibitem{Gimenez95}
Eduardo Gim{\'e}nez.
\newblock {Codifying Guarded Definitions with Recursive Schemes}.
\newblock {\em Types for proofs and Programs}, pages 39--59, 1995.

\bibitem{DBLP:conf/tphol/GonthierMRTT07}
Georges Gonthier, Assia Mahboubi, Laurence Rideau, Enrico Tassi, and Laurent
  Th{\'e}ry.
\newblock {A Modular Formalisation of Finite Group Theory}.
\newblock In Klaus Schneider and Jens Brandt, editors, {\em TPHOLs}, volume
  4732 of {\em Lecture Notes in Computer Science}. Springer, 2007.

\bibitem{DBLP:conf/tphol/GregoireM05}
Benjamin Gr{\'e}goire and Assia Mahboubi.
\newblock {Proving Equalities in a Commutative Ring Done Right in Coq}.
\newblock In Joe Hurd and Thomas~F. Melham, editors, {\em TPHOLs}, volume 3603
  of {\em Lecture Notes in Computer Science}, pages 98--113. Springer, 2005.

\bibitem{GregoireTW06}
Benjamin Gr{\'e}goire, Laurent Th{\'e}ry, and Benjamin Werner.
\newblock {A Computational Approach to Pocklington Certificates in Type
  Theory}.
\newblock {\em Functional and Logic Programming}, 2006.

\bibitem{DBLP:conf/types/Letouzey02}
Pierre Letouzey.
\newblock {A New Extraction for Coq}.
\newblock In Herman Geuvers and Freek Wiedijk, editors, {\em TYPES}, volume
  2646 of {\em Lecture Notes in Computer Science}, pages 200--219. Springer,
  2002.

\bibitem{letouzey04}
Pierre Letouzey.
\newblock {\em {Programmation fonctionnelle certifi{\'e}e: L'extraction de
  programmes dans l'assistant Coq}}.
\newblock PhD thesis, Universit{\'e} Paris-Sud, July 2004.

\bibitem{DBLP:conf/cie/Letouzey08}
Pierre Letouzey.
\newblock {Extraction in Coq: An Overview}.
\newblock In Arnold Beckmann, Costas Dimitracopoulos, and Benedikt L{\"o}we,
  editors, {\em CiE}, volume 5028 of {\em Lecture Notes in Computer Science},
  pages 359--369. Springer, 2008.

\bibitem{norell07}
Ulf Norell.
\newblock {\em {Towards a Practical Programming Language Based on Dependent
  Type Theory}}.
\newblock PhD thesis, Chalmers Univ. of Tech, 2007.

\bibitem{DBLP:conf/popl/Paulin-Mohring89}
Christine Paulin-Mohring.
\newblock {Extracting F(omega)'s Programs from Proofs in the Calculus of
  Constructions}.
\newblock In {\em POPL}, pages 89--104, 1989.

\bibitem{springerlink:10.1007/BFb0037116}
Christine Paulin-Mohring.
\newblock Inductive definitions in the system coq rules and properties.
\newblock In Marc Bezem and Jan Groote, editors, {\em Typed Lambda Calculi and
  Applications}, volume 664 of {\em Lecture Notes in Computer Science}, pages
  328--345. Springer Berlin / Heidelberg, 1993.
\newblock 10.1007/BFb0037116.

\bibitem{DBLP:conf/tlca/PlotkinA93}
Gordon~D. Plotkin and Mart\'{\i}n Abadi.
\newblock {A Logic for Parametric Polymorphism}.
\newblock In Marc Bezem and Jan~Friso Groote, editors, {\em TLCA}, volume 664
  of {\em Lecture Notes in Computer Science}, pages 361--375. Springer, 1993.

\bibitem{DBLP:conf/ifip/Reynolds83}
John~C. Reynolds.
\newblock {Types, Abstraction and Parametric Polymorphism}.
\newblock In {\em IFIP Congress}, pages 513--523, 1983.

\bibitem{DBLP:journals/fuin/Takeuti98}
Izumi Takeuti.
\newblock {An Axiomatic System of Parametricity}.
\newblock {\em Fundam. Inform.}, 33(4), 1998.

\bibitem{Coqdev11}
{The Coq Development Team}.
\newblock {\em {The Coq Proof Assistant: Reference Manual}}.
\newblock INRIA, 2012.

\bibitem{VytiniotisW10}
Dimitrios Vytiniotis and Stephanie Weirich.
\newblock {Parametricity, Type Equality, and Higher-Order Polymorphism}.
\newblock {\em Journal of Functional Programming}, 20(02):175--210, 2010.

\bibitem{Wadler89}
Philip Wadler.
\newblock Theorems for free!
\newblock In {\em Proceedings of the fourth international conference on
  Functional programming languages and computer architecture}, FPCA '89, pages
  347--359, New York, NY, USA, 1989. ACM.

\bibitem{DBLP:journals/tcs/Wadler07}
Philip Wadler.
\newblock {The Girard-Reynolds isomorphism (second edition)}.
\newblock {\em Theor. Comput. Sci.}, 375(1-3):201--226, 2007.

\end{thebibliography}

\end{document}